\documentclass[conference]{IEEEtran}
\usepackage{cite}
\usepackage{amsthm}
\usepackage{amsmath,amssymb,amsfonts}
\usepackage{amsthm}
\usepackage{algorithmic}
\usepackage{graphicx}
\usepackage{textcomp}
\usepackage{xcolor}
\usepackage{caption}
\newtheorem{theorem}{Theorem}
\newtheorem{res}{Result}
\newtheorem{cor}{Corollary}
\usepackage{comment}
\usepackage{bm}  

\belowdisplayskip=\abovedisplayskip
\begin{document}

\title{Performance Analysis of Movable Antenna Arrays}


\author{\IEEEauthorblockN{Gayani Siriwardana$^*$, Peter J. Smith$^\dagger$, Himal A. Suraweera$^\ddag$,  Rajitha~Senanayake$^*$}
\IEEEauthorblockA{$^*$Department of Electrical
and Electronic Engineering, University of Melbourne, Parkville,
VIC. 3010, Australia.}
\IEEEauthorblockA{$^\dagger$School of Mathematics and Statistics, Victoria University of Wellington, Wellington, New Zealand.}
\IEEEauthorblockA{$^\ddag$Department of Electrical and Electronic Engineering, University of Peradeniya, Sri Lanka.}
\small{gsiriwardana@student.unimelb.edu.au,
peter.smith@vuw.ac.nz,
himal@eng.pdn.ac.lk,
rajitha.senanayake@unimelb.edu.au}}

\maketitle

\begin{abstract}
This paper provides a thorough mathematical analysis of continuous movable antenna (MA) arrays. Focusing on the multiple antenna case, we consider a linear antenna array with multiple fixed antenna elements that moves along  a line. We assume a full, spatially coherent correlation model and continuous positioning of the array. We provide asymptotically exact approximations to the upper tail of the cumulative distribution function (cdf) of the signal-to-noise ratio (SNR), considering both correlated and uncorrelated antenna elements in the array. We also obtain a novel closed-form expression for the level crossing rate (LCR) of the SNR under correlated array elements, where a non-separable two-dimensional correlation is present. The analysis is validated through simulations, confirming both the accuracy of the LCR expressions and the tightness of the cdf bounds in the upper tail. Numerical results show that the proposed MA array outperforms single fluid antenna and fixed-array systems, with reduced inter-element spacing providing further performance gains.
\end{abstract}
\begin{IEEEkeywords}
Movable antennas, fluid antenna systems, level crossing rate
\end{IEEEkeywords}
\section{Introduction}

Innovative reconfigurable antenna technologies, such as fluid antenna systems (FASs)~\cite{WK_New_Surv_Turor_25}, have been proposed to maximize the benefits of the spatial domain in wireless communication. A fluid antenna (FA) refers to a software-controlled structure composed of fluidic, conductive, or dielectric materials, designed to dynamically reshape and reposition itself. This broad concept includes various types of position-flexible antennas, such as on-off switching pixels, liquid-based structures, and movable antenna (MA) systems~\cite{WK_New_Surv_Turor_25}. 

In recent years, the performance of FAs has been studied for various systems. In~\cite{Wong_ITWC_2021}, the performance of a single FAS with discrete positioning is analyzed, considering a fixed set of positions. In~\cite{Psomas_ICL_2023}, single FAS with continuous positioning is analyzed, where the antenna can move freely along the available space. The use of single FAS has also been extended to multi-user scenarios by exploiting the spatial variations of fading channels to mitigate inter-user interference~\cite{wong_TWC_2022}.

To further improve system performance, recent works have extended to multi-antenna FAS configurations. Existing work on multi-antenna FASs can be broadly classified as discrete and continuous positioning systems. In discrete positioning~\cite{XLai_COMML_2024,Wee_Kiat_TWC_2024,Krikidis_2024Dec_TC}, multiple ports can be activated simultaneously, while in continuous positioning~\cite{Ma_TWC2024,Xiao_2024_TWC,Zhu_2025sept_TWC}, multiple antennas can move along the available space, which is more challenging to analyze. In~\cite{XLai_COMML_2024}, the outage probability of a FAS with multiple activated ports and maximum ratio combining (MRC) receiver processing has been derived. In~\cite{Wee_Kiat_TWC_2024}, multiple-input multiple-output (MIMO) FAS has been studied, in which both transmitter and receiver are equipped with multi-port FAs. For continuous positioning systems, a MIMO system with multiple MAs at both the transmitter and receiver is proposed in~\cite{Ma_TWC2024}, showing that channel capacity can be improved by up to 30.3\% compared to traditional MIMO systems. Most recently,~\cite{Xiao_2024_TWC,Zhu_2025sept_TWC} consider multiple MAs and multiple movable arrays respectively. Both studies propose algorithms to determine the optimal antenna positions and show that significant gains are achievable through multi-antenna mobility.

Among various multi-antenna configurations in FASs, continuous positioning offers significant potential to exploit maximum spatial diversity. Interestingly, such positioning also provides insights about the limiting performance of FASs. However, most existing works,~\cite{Ma_TWC2024,Xiao_2024_TWC,Zhu_2025sept_TWC}, focus on optimizing antenna positions, and to the best of our knowledge, a thorough mathematical analysis on the system performance is still lacking. Motivated by this gap, we aim to make analytical progress on the system performance of continuous movable antenna arrays. More specifically, we focus on the upper tail of the cumulative distribution function (cdf) of the received signal-to-noise ratio (SNR) and derive analytical expressions to bound the performance. While the lower tail of the SNR cdf is important to assess reliability, here we focus on the upper tail as it quantifies the ability of a FAS to achieve high data rates through maximizing the instantaneous SNR. In particular, for FASs, the upper tail is especially relevant since the antenna searches for a position that maximizes the instantaneous value of the performance metric. The main contributions of this paper can be summarized as follows:
\begin{enumerate}
    \item We analyze a multi-antenna FAS, consisting of a linear MA array whose elements move together continuously along the axis perpendicular to its orientation.
    \item We derive asymptotically exact approximations for the upper tail of the cdf of the SNR, considering both correlated and uncorrelated array elements. A novel expression for the level crossing rate (LCR) of the SNR is also obtained for the scenario of correlated elements.
    \item Based on our results, we draw key insights into the performance of the proposed MA array. In particular, we show that placing the array elements closer together improves the overall system performance.
\end{enumerate}

\section{System Model}
We consider a system with a single-antenna transmitter and a receiver equipped with a linear MA array. The array elements are fixed relative to each other, while the entire array can move along a perpendicular axis, as shown in Fig.~\ref{im1}.
\begin{figure}[tb]
    \centerline{\includegraphics{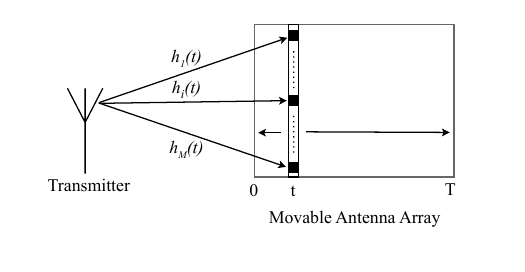}}
    \vspace{-7mm}  
    \caption{An illustration of a movable antenna array.}
    \label{im1}
\end{figure}

The array consists of $M$ fixed antennas with $\Delta$ separation between two adjacent antennas. We assume that the array can occupy any position $t$, from zero to $T$. Furthermore, we ignore delays in antenna movement \cite{Wong_ITWC_2021, wong_TWC_2022} in order to investigate optimal performance. Based on this setup, the received signal at position $t$ at antenna $i$ can be modeled as
\begin{equation}
\label{received_signal}
  y_{i}(t) = h_{i}(t)x + n(t),
\end{equation}
for $0 \leq t \leq T$ where $h_{i}(t)$ denotes the channel between the transmitter and the $i$-th receiver when the array is at position $t$. Since all antennas in the array are closely located, we assume equal channel powers given by $\beta$ and $h_{i}(t) \sim \mathcal{CN}(0,\beta)$ for all $i \in \{1, \ldots, M\}$. The channel vector at location $t \in [0,T]$ is $\mathbf{h}(t) = \left[ h_1(t), \, h_2(t), \, \cdots, \, h_M(t) \right]^T
$ where \((\cdot)^T\) denotes the transpose. The transmitted data symbol is $x$ where $\mathbb{E}\left[|x|^2\right] = E_{x}$ and $n(t)$ is the additive white Gaussian noise (AWGN) at position $t$ with zero mean and variance $\sigma^2$. 

To characterize the correlation between antennas at different positions, we first define the distance between the \(r\)-th antenna of the array at position $t$ and the \(s\)-th antenna of the array at position $t+\tau$ as
\begin{align}\label{dist}
    d_{r,s}(\tau) = \sqrt{\tau^2 + \Delta^2(r-s)^2}.
\end{align}
Note that $\tau$, $\Delta$, and all other distances used in this paper are measured in wavelengths. Assuming the classic isotropic scattering model, the correlation between the channels at two positions separated by a distance $d_{r,s}(\tau)$, can be described using Jakes' model as \cite{Heath_book_2018}
\begin{equation}\label{correlation}
\rho(d_{r,s}(\tau)) = \frac{1}{\beta}\mathbb{E}[h_r(t)h_s^*(t+\tau)] =J_0(2\pi {d_{r,s}}(\tau)),
\end{equation}
where $h_i^*(.)$ denotes the complex conjugate of $h_i(.)$,
and $J_0(\cdot)$ is the zero-th order Bessel function of the first kind \cite[Eq. (8.402)]{Gradshteyn_book_2007}. The correlation between antennas in the antenna array at a fixed position corresponds to the case $\tau=0$, and we define the corresponding spatial correlation matrix as
\begin{align}\label{sig}
    \mathbf{\Sigma} = \frac{1}{\beta}\mathbb{E}[\bm{\mathrm{h}}(t) \bm{\mathrm{h}}(t)^H],
\end{align}
where \((\cdot)^H\) denotes the Hermitian. In addition, by using \eqref{correlation}, the entries of $\mathbf{\Sigma}$ can be written as
\begin{equation}\label{sigmars}
[\mathbf{\Sigma}]_{r,s} =
\begin{cases}
1, & \text{if } r = s,\\[2mm]
 J_0\!\bigl(2\pi |r-s| \Delta\bigr), & \text{if } r \neq s.
\end{cases}
\end{equation}
It should be noted that any correlation model of the form
\begin{align}\label{correlation2}
\rho(\tau) = 1 - b \tau^2 + o(\tau^2),	
\end{align}
for small $\tau$ and $b > 0$ gives a mean-square differentiable channel which corresponds to a physically reasonable Rayleigh channel. Note that \eqref{correlation2} uses the standard Bachmann-Landau little-O notation. Here, we consider the Jakes' model in \eqref{correlation}, for which $b = \pi^2$ using the series representation in \cite{Gradshteyn_book_2007}. Although numerical results are obtained using this model, the analysis in Section~\ref{3} applied to any correlation model satisfying \eqref{correlation2} with a known first derivative of $\rho$. 

The MA array employs MRC as the receiver combiner, and the resulting post-combining SNR can be written as
 \begin{equation}\label{snr_array}
S(t) = \frac{E_{x}}{\sigma^2} \mathbf{h}^H(t) \mathbf{h}(t) = \frac{E_{x}}{\sigma^2}\sum_{i=1}^{M} |h_i(t)|^2.
\end{equation}
 When operating, the antenna array is moved to a position such that the overall SNR, $S(t)$ is maximized. Let us denote the optimal position of the array as $t^*$ such that 
\begin{equation}
S(t^*) = \sup_{0 \leq t \leq T} \{S(t)\},
\end{equation}
where $\sup_{0 \leq t \leq T} \{S(t)\}$ denotes the supremum of $S(t)$. For ease of notation, we denote $S(t^*) \triangleq S^*$ in the following.
\section{Performance analysis}\label{3}

In this section, we analyze the cdf of $S^*$ for the MA array. Deriving the exact distribution of $S^*$ is mathematically challenging due to the continuous nature of the MAs and the full physical modeling of the correlation. Thus, in this paper, following \cite{Psomas_ICL_2023}, we take a more tractable approach and analyze the upper tail of the cdf of $S^*$, denoted by $F_{S^*}(s_{\text{th}})$. The methodology in \cite{Psomas_ICL_2023} gives a lower bound on the cdf which is an asymptotically exact approximation to $F_{S^*}(s_{\text{th}})$ for large arguments, $s_{\text{th}}$. The relevant result is \cite[Eq. (41)]{Psomas_ICL_2023}
\begin{align}\label{s*}
F_{S^*}(s_{\text{th}})  \geq \mathbb{P}(S(0) \leq s_{\text{th}}) - T\times{LCR_{S(t)}(s_{\text{th}})},
\end{align}
where $\mathbb{P}(E)$ is the probability of event $E$ and ${LCR_{S(t)}(s_{\text{th}})}$ is the LCR of the process $S(t)$ across the threshold, $s_{\text{th}}$. This is an asymptotically exact lower bound for any stationary process $S(t)$ satisfying \eqref{correlation2}. Note that \eqref{s*} only depends on the cdf of $S(0)$ and the LCR of $S(t)$. \\

\subsection{Movable Antenna Array with Uncorrelated Antennas}
When the separation between two antennas in the array, $\Delta$, exceeds approximately half a wavelength, the channels can be well-modeled as uncorrelated. In this case, the array behaves as an uncorrelated fixed-antenna system moving through spatially correlated positions. Note that the basic approach in \eqref{s*} still applies here. The LCR expression for such a process, for 
$M$ independent antennas, is derived in \cite{Yacoub_ITVT_2001} and is given in \eqref{lcr_uncorrelated_array}. Using this result, an asymptotically exact lower bound on the cdf of $S^*$ is obtained in Result~\ref{res_array_equal}.
\begin{res}\label{res_array_equal}
An asymptotically exact lower bound on the cdf of the SNR for a MA array employing MRC under uncorrelated antennas is given by
\begin{equation}\label{cdf_uncorrelated_array}
    \text{F}_{S^*}(s_{\text{th}}) \geq \frac{\gamma\left(M,\frac{\sigma^2s_\text{th}}{E_{x} \beta}\right)}{\Gamma(M)}- T\times LCR^{(\text{unc})}_{S(t)}(s_\text{th}), 
\end{equation}
where $\gamma(u,n)$ is the lower incomplete gamma function~\cite[Eq. (8.350-1)]{Gradshteyn_book_2007}, $\Gamma(.)$ is the gamma function~\cite[Eq. (8.350-3)]{Gradshteyn_book_2007}, and $LCR^{(\text{unc})}_{S(t)}(s_\text{th})$ is given by 
\begin{align}\label{lcr_uncorrelated_array}
LCR^{(\text{unc})}_{S(t)}(s_\text{th}) = \frac{\sqrt{b}}{\sqrt{2 \pi} (M-1)!}\left(\frac{\sigma ^2s_\text{th}}{E_{x}\beta}\right)^{2M-1}e^{\frac{-s_\text{th} \sigma^2}{E_{x}\beta}}.
    \end{align}
\end{res}   
\begin{proof}
 $LCR^{(\text{unc})}_{S(t)}(s_\text{th})$ is derived in \cite{Yacoub_ITVT_2001}. Since each $|h_i(t)|^2$ is exponentially distributed with mean $\beta$, their sum $\sum_{i=1}^M |h_i(t)|^2$ follows a gamma distribution i.e., $\sum_{i=1}^M |h_i(t)|^2\sim \Gamma(M,\beta)$. Substituting the above results into \eqref{s*} yields \eqref{cdf_uncorrelated_array}.
\end{proof}
\subsection{Movable Antenna Array with Correlated Antennas}
In practice, the available space to implement the MA might be limited, requiring the antennas in the array to be placed in close proximity. Furthermore, it is shown in Section~\ref{4} that a small antenna spacing improves system performance. Therefore, it is important to consider the system in the presence of spatial correlation between antennas. To analyze the performance of such a system, the same approach as \eqref{s*} can be used. It is important to point-out that the analytical expression for the LCR of such an SNR process is novel and has not appeared before in the literature. In this paper, we derive a new closed-form expression for the LCR of SNR under this correlated antenna model.
\begin{theorem}\label{LCR_new}
The LCR of the SNR for an MRC system with $M$ correlated antennas moving through spatially correlated positions is given by
\begin{align}\label{corr_array_lcr}
LCR^{\text{(corr)}}_{S(t)}(s_\text{th}) = \frac{1}{4\sqrt{2}\pi|\mathbf{Q}|} \int_{-\infty}^{\infty} \sum_{i=1}^{M}A_ig_i^{-\frac{3}{2}}e^{-jt_1 s_{\text{th}}} dt_1,
    \end{align} 
    where $\mathbf{Q} = 4 \big(\frac{E_x\beta}{\sigma^2}\big)^2\mathbf{\Sigma}^{\frac{1}{2}}\mathbf{B}\mathbf{\Sigma}^{\frac{1}{2}}$, $g_i$ are the eigenvalues of $\mathbf{G} = \mathbf{Q}^{-1} - jt_1\mathbf{Q}^{-\frac{1}{2}} \mathbf{\Sigma} \mathbf{Q}^{-\frac{1}{2}}$, $A_i = \prod_{j=1,j\neq i}^{M} (g_j - g_i)^{-1}$ and the matrix $\mathbf{B}$ is given by 
\begin{equation}
[\mathbf{B}]_{r,s} = 
\begin{cases}
\pi^2, & \text{if } r = s \\
\displaystyle \frac{\pi J_1\left(2\pi |r-s|\Delta\right)}{|r-s|\Delta}, & \text{if } r \neq s.
\end{cases}
\end{equation}
\end{theorem}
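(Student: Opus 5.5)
I would start from Rice's formula. For a stationary, mean-square differentiable process the LCR is
\begin{equation*}
LCR_{S(t)}(s_{\text{th}})=\int_0^\infty \dot s\, f_{S,\dot S}(s_{\text{th}},\dot s)\,d\dot s .
\end{equation*}
The plan is to get the joint law of $(S,\dot S)$ at a fixed position $t$.

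\textbf{Step 1: derivative process.} Write $\mathbf h=\mathbf h(t)$ and $\dot{\mathbf h}=\mathbf h'(t)$. Differentiating \eqref{correlation} with \eqref{dist} gives the moments of $(\mathbf h,\dot{\mathbf h})$:
\begin{itemize}
\item $\mathbb E[\mathbf h\dot{\mathbf h}^H]=\mathbf 0$, since $\rho$ is even in $\tau$.
\item $\frac1\beta\mathbb E[\dot h_r\dot h_s^*]=-\partial_\tau^2\rho(d_{r,s}(\tau))|_{\tau=0}$.
\end{itemize}
For $r=s$ the second moment is $2b=2\pi^2$. For $r\ne s$, we have $\partial_\tau d=\tau/d$, so at $\tau=0$ the second derivative is $\rho'(d)/d$. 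Using $J_0'=-J_1$, this gives $2\pi J_1(2\pi|r-s|\Delta)/(|r-s|\Delta)$. Hence $\mathbb E[\dot{\mathbf h}\dot{\mathbf h}^H]=2\beta\mathbf B$, which is exactly where $\mathbf B$ comes from. This also shows the correlation is non-separable, since $\mathbf B$ is not a multiple of $\mathbf\Sigma$. Because $\mathbf h$ and $\dot{\mathbf h}$ are jointly Gaussian and uncorrelated, they are independent.

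\textbf{Step 2: conditional law of $\dot S$.} With $c=E_x/\sigma^2$,
\begin{equation*}
\dot S=2c\,\mathrm{Re}(\mathbf h^H\dot{\mathbf h}).
\end{equation*}
Conditioned on $\mathbf h$, this is zero-mean Gaussian with variance $\sigma_{\dot S}^2=\tfrac12\cdot 4c^2\cdot 2\beta\,\mathbf h^H\mathbf B\mathbf h$. Substituting $\mathbf h=\sqrt\beta\,\mathbf\Sigma^{1/2}\mathbf w$ with $\mathbf w\sim\mathcal{CN}(\mathbf 0,\mathbf I)$ gives $\sigma_{\dot S}^2=\mathbf w^H\mathbf Q\mathbf w$ up to the constant absorbed in $\mathbf Q$. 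The Gaussian identity $\int_0^\infty \dot s f(\dot s)\,d\dot s=\sigma/\sqrt{2\pi}$ then yields
\begin{equation*}
LCR=\frac{1}{\sqrt{2\pi}}\,\mathbb E\!\left[\sqrt{\mathbf w^H\mathbf Q\mathbf w}\;\delta\!\big(s_{\text{th}}-c\beta\,\mathbf w^H\mathbf\Sigma\mathbf w\big)\right].
\end{equation*}

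\textbf{Step 3: evaluate the expectation.} This is the main obstacle, because $\mathbf Q$ and $\mathbf\Sigma$ do not commute. I would represent the delta by its Fourier integral, $\delta(x)=\frac1{2\pi}\int e^{-jt_1x}\,dt_1$, giving an integrand $e^{-jt_1 s_{\text{th}}}\,\mathbb E[\sqrt{\mathbf w^H\mathbf Q\mathbf w}\,e^{jt_1 c\beta\mathbf w^H\mathbf\Sigma\mathbf w}]$. Next, change variables $\mathbf v=\mathbf Q^{1/2}\mathbf w$. The Jacobian gives the $1/|\mathbf Q|$ factor, and the Gaussian exponent becomes $-\mathbf v^H\mathbf G\mathbf v$ with $\mathbf G=\mathbf Q^{-1}-jt_1\mathbf Q^{-1/2}\mathbf\Sigma\mathbf Q^{-1/2}$ (constants absorbed as in the statement). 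The remaining factor is $\sqrt{\mathbf v^H\mathbf v}=\|\mathbf v\|$.

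Diagonalize $\mathbf G=\mathbf U\,\mathrm{diag}(g_i)\mathbf U^{-1}$. Since $\|\mathbf v\|$ is unitarily invariant, I would treat $\mathbf G$ as normal when $t_1$ is real, or justify the step by analytic continuation in $t_1$. The integral then reduces to $\int\|\mathbf v\|e^{-\sum g_i|v_i|^2}d\mathbf v$. Passing to $u_i=|v_i|^2$ and using $\|\mathbf v\|=\sqrt{\sum u_i}$, I would write $\sqrt x=\frac{1}{2\sqrt\pi}\int_0^\infty(1-e^{-sx})s^{-3/2}ds$. This gives
\begin{equation*}
\pi^M\int_0^\infty \Big[\prod_i g_i^{-1}-\prod_i(g_i+s)^{-1}\Big]\,s^{-3/2}\,ds .
\end{equation*}
The partial fractions $\prod_i (g_i+s)^{-1}=\sum_i A_i/(g_i+s)$, together with $\int_0^\infty s^{-3/2}\big(g_i^{-1}-(g_i+s)^{-1}\big)ds=\pi g_i^{-3/2}$, produce $\sum_i A_i g_i^{-3/2}$ up to constants. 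Collecting the factors of $\pi$ and $\sqrt2$ should give the prefactor $1/(4\sqrt2\pi|\mathbf Q|)$. The delicate points are:
\begin{itemize}
\item the legitimacy of the complex change of variables for complex symmetric non-Hermitian $\mathbf G$;
\item the requirement of distinct eigenvalues $g_i$;
\item checking the constants against the uncorrelated case $\mathbf\Sigma=\mathbf I$, $\mathbf B=\pi^2\mathbf I$, which must reproduce \eqref{lcr_uncorrelated_array}.
\end{itemize}
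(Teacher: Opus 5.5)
Your argument is correct in substance, and it differs from the paper's at both main steps; one justification needs replacing.

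\textbf{Comparison with the paper.} For the derivative statistics, the paper posits $\mathbf{h}(t+\tau)=\mathbf{A}\mathbf{h}(t)+\mathbf{e}$, expands $\mathbf{\Sigma}_\tau=\mathbf{\Sigma}-\mathbf{B}\tau^2$, deduces $\mathbb{E}[\mathbf{e}\mathbf{e}^H]=2\mathbf{B}\tau^2$ and lets $\tau\to0$. You read off $\mathbb{E}[\mathbf{h}\dot{\mathbf{h}}^H]=\mathbf{0}$ and $\mathbb{E}[\dot{\mathbf{h}}\dot{\mathbf{h}}^H]=2\beta\mathbf{B}$ directly from $-\partial_\tau^2\rho(d_{r,s}(\tau))|_{\tau=0}$. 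This is shorter, and it makes the independence of $\mathbf{h}$ and $\dot{\mathbf{h}}$ explicit rather than tacit. For $\mathbb{E}[\sqrt{P}\,\delta(s_{\text{th}}-S)]$, the paper forms the joint characteristic function $\Phi_{S,P}(t_1,t_2)=1/(|\mathbf{Q}|\prod_i(g_i-jt_2))$. It inverts in $t_2$ term by term, each partial fraction giving $2\pi e^{-g_ip}$, and then integrates $\sqrt{p}\,e^{-g_ip}$ over $p$. You invert only in $t_1$ and replace the $t_2$ inversion by the Laplace-type representation of $\sqrt{x}$, so your $s$ plays the role of $-jt_2$. The paper's route yields the whole joint density $f_{S,P}$, which is reusable for other functionals. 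Yours trades the conditionally convergent $t_2$ integral for an absolutely convergent one. The constants do close: $\frac{1}{\sqrt{2\pi}}\cdot\frac{1}{2\pi}\cdot\frac{1}{\pi^M|\mathbf{Q}|}\cdot\frac{\pi^M}{2\sqrt{\pi}}\cdot\pi=\frac{1}{4\sqrt{2}\pi|\mathbf{Q}|}$.

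\textbf{The diagonalization step is wrongly justified, though its conclusion survives.} Let $\mathbf{F}=\mathbf{Q}^{-1/2}\mathbf{\Sigma}\mathbf{Q}^{-1/2}$ and take $t_1$ real. Then $\mathbf{G}\mathbf{G}^H-\mathbf{G}^H\mathbf{G}=2jt_1(\mathbf{Q}^{-1}\mathbf{F}-\mathbf{F}\mathbf{Q}^{-1})$, which for $t_1\neq0$ vanishes only if $\mathbf{B}$ commutes with $\mathbf{\Sigma}$. That holds for $M=2$ but fails in general for $M\geq 3$, so $\mathbf{G}$ is not normal. A non-unitary $\mathbf{U}$ neither turns the sesquilinear form $\mathbf{v}^H\mathbf{G}\mathbf{v}$ into $\sum_i g_i|x_i|^2$ nor preserves $\|\mathbf{v}\|$.

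You do not need the diagonalization at all. Apply the $\sqrt{x}$ representation directly to $x=\|\mathbf{v}\|^2$ and swap integrals by Fubini, which is justified because $|e^{-\mathbf{v}^H\mathbf{G}\mathbf{v}}|=e^{-\mathbf{v}^H\mathbf{Q}^{-1}\mathbf{v}}$ for real $t_1$. Then use $\int e^{-\mathbf{v}^H\mathbf{M}\mathbf{v}}\,d\mathbf{v}=\pi^M/\det\mathbf{M}$, valid whenever $\mathbf{M}+\mathbf{M}^H$ is positive definite. For $\mathbf{M}=\mathbf{G}+s\mathbf{I}$ this sum is $2(\mathbf{Q}^{-1}+s\mathbf{I})$, and $\det(\mathbf{G}+s\mathbf{I})=\prod_i(g_i+s)$ holds for any matrix, which gives exactly your bracket. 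This is also how the paper avoids the issue: it only ever uses the determinant $|\mathbf{Q}^{-1}-jt_1\mathbf{F}-jt_2\mathbf{I}|$. The same positivity gives $\Re g_i>0$, which fixes the principal branch of $g_i^{-3/2}$. Your analytic-continuation fallback is also valid, since $\mathbf{G}$ is Hermitian for $t_1=-j\theta$ with small real $\theta$.

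\textbf{Two remarks on constants.}
\begin{enumerate}
\item If you carry the SNR scaling honestly, the exponent is $jt_1\frac{E_x\beta}{\sigma^2}\mathbf{v}^H\mathbf{F}\mathbf{v}$, so the second term of $\mathbf{G}$ should carry the factor $E_x\beta/\sigma^2$. This cannot be absorbed, because $\mathbf{Q}$ already contains $(E_x\beta/\sigma^2)^2$. The paper's proof makes the same simplification: it sets $S(t)=\mathbf{h}(t)^H\mathbf{h}(t)$, dropping $E_x\beta/\sigma^2$ there while keeping it in $\mathbf{Q}$. The stated formula is therefore exact for $E_x\beta/\sigma^2=1$, the setting of the numerical results; say that rather than absorbing the factor.
\item Your check at $\mathbf{\Sigma}=\mathbf{I}$, $\mathbf{B}=\pi^2\mathbf{I}$ cannot use the $A_i$, because all $g_i$ then coincide. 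Working from the determinant form gives $\frac{2\sqrt{b}}{\sqrt{2\pi}(M-1)!}x^{M-\frac12}e^{-x}$ with $x=\sigma^2 s_{\text{th}}/(E_x\beta)$, which is the standard Nakagami-$M$ level crossing rate of the SNR. By contrast, \eqref{lcr_uncorrelated_array} as printed has power $2M-1$ and prefactor $\sqrt{b}$, so do not expect a literal match.
\end{enumerate}
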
   
\begin{proof}
    See Appendix.
\end{proof}
\begin{cor}
   The cdf of $S^*$ is lower bounded by
\begin{align}\label{cdf_corr}
&F_{S^*}(s_{\text{th}}) \geq 1 - \sum_{i=1}^{M} \frac{e^{ \frac{-s_\text{th}\sigma^2}{\lambda_i\beta E_{x}}}}{P_i} - T\times LCR^{\text{(corr)}}_{S(t)}(s_{\text{th}}),
\end{align}  where $\lambda_i$ are the eigenvalues of ~$\mathbf{\Sigma}$ and $P_i = \prod_{\substack{j=1 \\ j \neq i}}^{M} (1-\lambda_j/\lambda_i)$.

\end{cor}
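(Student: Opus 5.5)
The corollary follows by substituting two ingredients into the general bound \eqref{s*}. The first is the cdf of $S(0)$ under the correlation matrix $\mathbf{\Sigma}$. The second is the LCR from Theorem~\ref{LCR_new}. Since Theorem~\ref{LCR_new} supplies $LCR^{\text{(corr)}}_{S(t)}(s_\text{th})$ directly, the only remaining work is the cdf term $\mathbb{P}(S(0)\le s_\text{th})$. I also need to check that \eqref{s*} applies, i.e.\ that $S(t)$ is stationary and mean-square differentiable. This holds because the correlation in \eqref{correlation} depends only on $\tau$ and has the form \eqref{correlation2}, with the $\tau^2$ coefficient of $J_0(2\pi d_{r,s}(\tau))$ finite for every pair $(r,s)$.

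For the cdf, I write $\mathbf{h}(0)=\sqrt{\beta}\,\mathbf{\Sigma}^{\frac{1}{2}}\mathbf{w}$ with $\mathbf{w}\sim\mathcal{CN}(\mathbf{0},\mathbf{I}_M)$. Then
\begin{align*}
S(0)=\frac{E_x\beta}{\sigma^2}\,\mathbf{w}^H\mathbf{\Sigma}\mathbf{w}=\frac{E_x\beta}{\sigma^2}\sum_{i=1}^M\lambda_i|\tilde w_i|^2 ,
\end{align*}
which uses the eigendecomposition $\mathbf{\Sigma}=\mathbf{U}\mathbf{\Lambda}\mathbf{U}^H$ and the unitary invariance $\tilde{\mathbf{w}}=\mathbf{U}^H\mathbf{w}\sim\mathcal{CN}(\mathbf{0},\mathbf{I}_M)$. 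So $S(0)$ is a weighted sum of independent unit-mean exponentials with weights $\frac{E_x\beta}{\sigma^2}\lambda_i$. Its moment generating function is $\prod_i(1-s\lambda_iE_x\beta/\sigma^2)^{-1}$.

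Assuming distinct eigenvalues, a partial-fraction expansion gives the hypoexponential survival function
\begin{align*}
\mathbb{P}(S(0)>s_\text{th})=\sum_{i=1}^M\Big(\prod_{j\ne i}\frac{\lambda_i}{\lambda_i-\lambda_j}\Big)e^{-\frac{s_\text{th}\sigma^2}{\lambda_i\beta E_x}}.
\end{align*}
The product factor equals $1/P_i$ with $P_i=\prod_{j\ne i}(1-\lambda_j/\lambda_i)$. Taking the complement and inserting it together with \eqref{corr_array_lcr} into \eqref{s*} yields \eqref{cdf_corr}.

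The only delicate point is the distinct-eigenvalue assumption needed for the partial fractions. For a Toeplitz Jakes matrix with $\Delta$ in the regime of interest this holds generically. Repeated eigenvalues can be handled as a limit of the formula, or noted as an assumption. Asymptotic exactness is inherited directly from \eqref{s*}.
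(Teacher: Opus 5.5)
Your proposal is correct and follows essentially the same route as the paper. Both substitute the LCR of Theorem~\ref{LCR_new} and the hypoexponential cdf of $S(0)$, obtained from the eigendecomposition $\mathbf{\Sigma}=\mathbf{U}\mathbf{\Lambda}\mathbf{U}^H$, into \eqref{s*}. The differences are minor: you derive the hypoexponential tail, with $1/P_i=\prod_{j\neq i}\lambda_i/(\lambda_i-\lambda_j)$, by partial fractions of the MGF rather than citing it, and you make explicit the distinct-eigenvalue assumption that the paper leaves implicit.
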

\begin{proof}
    $LCR^{\text{(corr)}}_{S(t)}(s_{\text{th}})$ is given in \eqref{corr_array_lcr}. Since the antennas in the array are correlated, the $|h_i(t)|^2$ are no longer independent. In this case, we can write $\mathbf{h}(t) = \sqrt{\beta}\, \mathbf{\Sigma}^{1/2} \mathbf{u}(t),$ where $\mathbf{u}(t) \sim \mathcal{CN}(\mathbf{0}, \mathbf{I}_M)$. Using the eigen-decomposition for $\mathbf{\Sigma}$, we write
\begin{equation}
    \mathbf{\Sigma} = \mathbf{U} \mathbf{\Lambda} \mathbf{U}^H,
\end{equation}
where $\mathbf{U}$ is a unitary matrix whose columns are the eigenvectors of $\mathbf{\Sigma}$, and $\mathbf{\Lambda} = \mathrm{diag}(\lambda_1, \dots, \lambda_M)$ contains the corresponding eigenvalues. Now $S(t)$ can be written as
\begin{equation}
    S(t) = \frac{E_x}{\sigma^2} \mathbf{h}^H(t) \mathbf{h}(t) 
         = \frac{\beta E_x}{\sigma^2} \sum_{i=1}^{M} \lambda_i \, |\tilde{u}_i|^2,
\end{equation}
where $\tilde{\mathbf{u}} = \mathbf{U}^H \mathbf{u}(t)$ has independent standard complex Gaussian entries, $(\tilde{\mathbf{u}})_i$ = $\tilde{u}_i$. Hence, $S(t)$ becomes a hypoexponential random variable and the cdf is given in \cite{smaili_2013_hypoexponential}. 
\end{proof}
\section{Numerical results}\label{4}
In this section, we validate our theoretical analysis of MA arrays with Monte Carlo simulations. For the sake of presentation, we set $\sigma^2 =1$, $b=\pi^2$, and $E_x\beta=1$.

Fig.~\ref{LCR} shows the LCR of SNR versus the threshold $s_{\text{th}}$ for three configurations: $M=4, \Delta=0.25$; $M=6, \Delta=0.25$; and $M=4, \Delta=0.5$, with movable length is set at $T=1$. The analytical LCR curves, obtained from \eqref{corr_array_lcr}, match the simulation results exactly for all scenarios. For a fixed number of elements ($M=4$), a smaller separation ($\Delta=0.25$) leads to a higher spatial correlation, causing rapid SNR fluctuations. At a fixed separation ($\Delta=0.25$), a smaller array ($M=4$) has a higher LCR at low SNR thresholds, while a larger array ($M=6$) has a higher LCR at high thresholds. This occurs because smaller arrays have lower average SNR, leading to more frequent low-threshold crossings, whereas larger arrays more often cross higher thresholds.
\begin{figure}[!t]\centering
  \includegraphics[width=\linewidth]{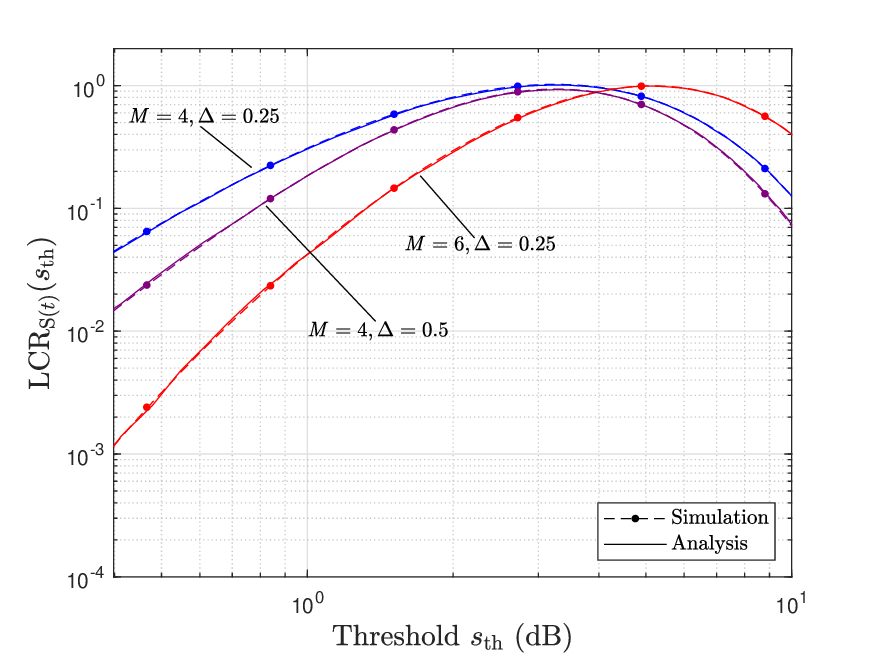}
    \caption{The LCR of the SNR of moving arrays with different dimensions.}
    \captionsetup{justification=centering}
    \label{LCR}
\end{figure}

For the same three configurations, Fig.~\ref{ccdf} plots the complementary cdf (ccdf) of the received SNR of the MA array. The analytical results are generated using the lower bound in \eqref{cdf_corr}. From the plot, we observe that the analytical result accurately lower bounds the simulation curves in the high threshold SNR regime. This regime corresponds to the probability of achieving high instantaneous SNRs, which determines peak data rates and system throughput. Increasing the number of fixed antennas in the array improves the system performance as expected. Furthermore, we observe that system performance improves as the antenna spacing (i.e., $\Delta$) decreases. This is because, at the optimal array position, all antennas encounter favorable channel conditions simultaneously due to high spatial correlation. In contrast, with larger spacing, not all antennas may have strong channels at the same time, resulting in lower overall performance. Unlike fixed arrays, where spacing reduces correlation to avoid deep fades, movable arrays benefit from compactness, enabling all elements to align with strong channels via mobility. Note that in practical systems, discrete sampling may result in performance slightly lower than the theoretical bound.
\begin{figure}[!t]\centering
  \includegraphics[width=\linewidth]{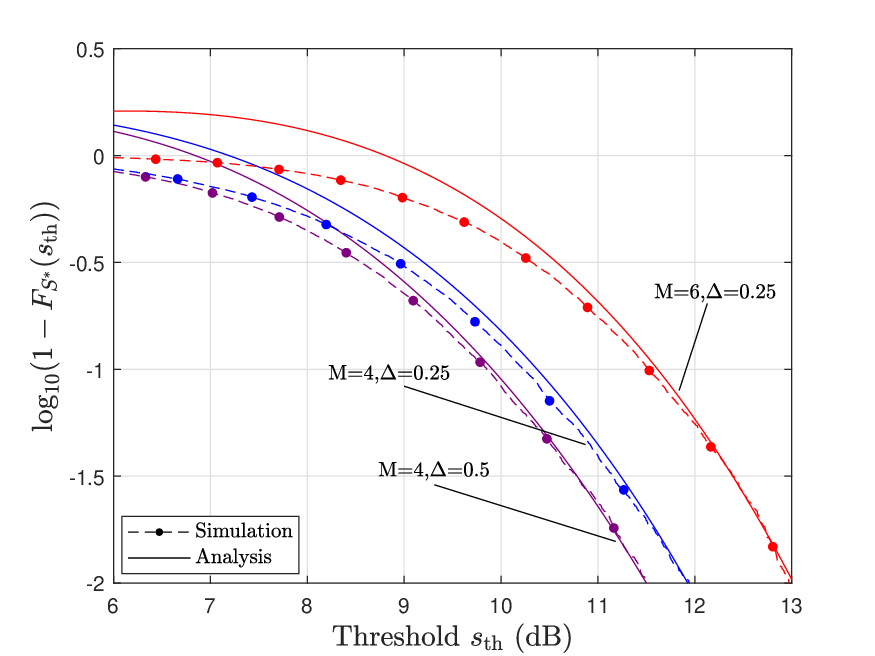}
    \caption{Logarithmic plot of the ccdf of the SNR of moving arrays with different dimensions, with $T=1$.}
    \captionsetup{justification=centering}
    \label{ccdf}
\end{figure}

The proposed MA array has two degrees of freedom: antenna separation ($\Delta$) and movable length ($T$), which can be exploited in different ways. To study their impact, Fig.~\ref{cdftwolen} plots the cdf of $S^*$ for four configurations: $T=1, \Delta=0.1$; $T=1, \Delta=0.5$; $T=0.1, \Delta=0.1$; and $T=0.1, \Delta=0.5$. As shown, increasing $T$ always improves performance. Furthermore, we can observe that for a fixed $\Delta$, increasing $T$ has a greater effect on outage than the high SNR region. It can be observed that increasing $\Delta$ reduces the outage due to increased spatial diversity. For a nearly static array (e.g., $T=0.1$), $\Delta$ does not affect the mean SNR. Rather, increasing $\Delta$ stabilizes the SNR, which reduces variability and lowers peak values. This explains the behavior observed in the plot for $T=0.1$, where spatial diversity is advantageous for lower threshold values while the correlation gain is advantageous in the high SNR region. Hence, for nearly static arrays, there is a trade-off between spatial diversity and correlation gain of antenna elements. When $T$ is large enough, mobility becomes the dominant factor in reducing outage, having a greater effect than spatial separation. With a large $T$, a single antenna in the array can reach positions with strong channel power, and a smaller $\Delta$ ensures that all antennas occupy this high-power region, explaining the observed behavior for the $T=1$ scenarios. Therefore, unneeded spatial diversity can be exchanged for high power correlated locations. Note that this property is not trivial as even for small $T$ (i.e., $T$=1), reducing $\Delta$ from 0.5 to 0.1 makes a substantial improvement. Moreover, this is advantageous as a more compact array reduces the overall dimensions of the system.

\begin{figure}[!t]\centering
  \includegraphics[width=\linewidth]{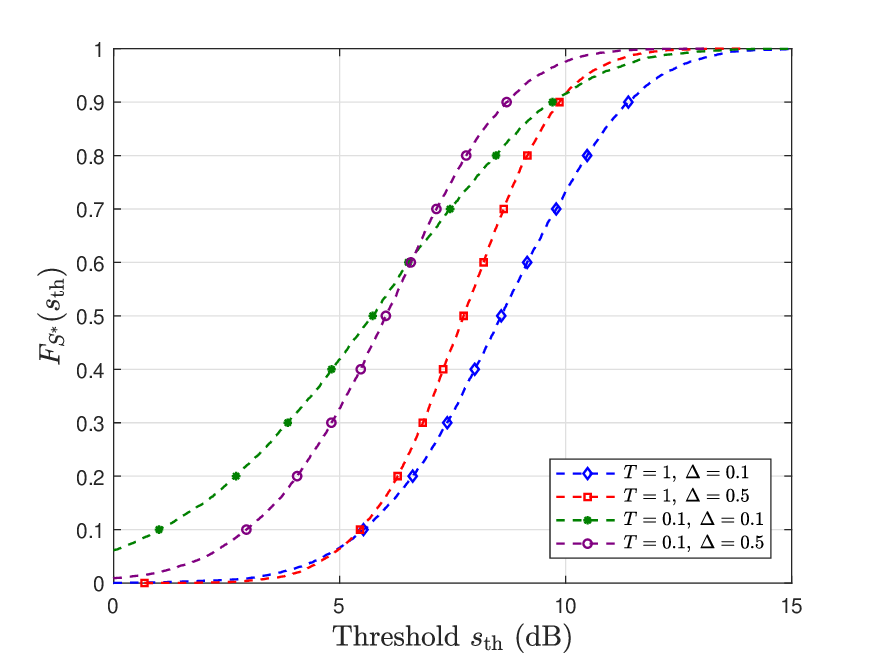}
    \caption{The cdf of the SNR for moving arrays with different dimensions at movable lengths $T=1
    $ and $T=0.1$.}
    \captionsetup{justification=centering}
    \label{cdftwolen}
\end{figure}

Finally, Fig.~\ref{comparison} compares the cdf of the received SNR for five systems: a single fixed antenna (SA), a single fluid antenna (SFA), a four-element fixed array (FA\_4) with $\Delta=0.5$, and two four-element MA arrays (MAA\_4) with $\Delta=0.5$ and $\Delta=0.1$. MRC is used for the three multi-antenna systems and all movable antennas move within one-wavelength region (i.e., $T=1$). The performance of the systems improves as more sources of spatial variability are exploited. The SA has the lowest performance since it lacks both mobility and diversity. The SFA provides considerable improvement due to mobility. In contrast, the FA\_4 achieves higher performance through MRC of the received signals, even without mobility. 
Both four-element MA arrays (MAA\_4), with $\Delta=0.5$ and $\Delta=0.1$, further improve performance by combining MRC with linear mobility. In particular, the configuration with smaller spacing ($\Delta=0.1$) achieves better performance, highlighting the additional benefit of reduced inter-element separation.
\vspace{-5pt}  
\begin{figure}[tb]\centering
  \includegraphics[width=\linewidth]{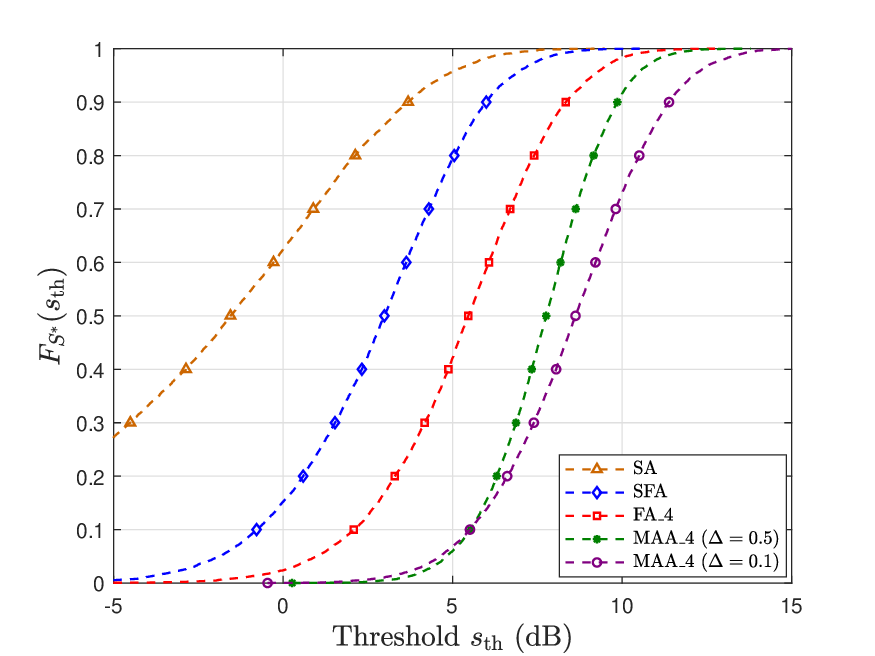}
    \caption{Performance comparison of (a) SA, (b) SFA, (c) FA\_4 with $\Delta=0.5$, (d) MAA\_4 with $\Delta=0.5$, and (e) MAA\_4 with $\Delta=0.1$ configurations. 
  All movable antennas move within a one-wavelength region ($T=1$).}
    \captionsetup{justification=centering}
    \label{comparison}
\end{figure}

\section{conclusion}
In this paper, we consider a MA array where the elements of a linear array are fixed relative to each other, while the entire array moves along an axis perpendicular to its orientation. We derive a novel closed-form expression for the LCR of the SNR when the array elements are correlated. Additionally, we obtain a lower bound on the cdf of the SNR’s supremum for this system. Our analysis is validated through Monte Carlo simulations, which confirms both the accuracy of the LCR expression and the effectiveness of the cdf bound in the upper tail. The results further demonstrate that reducing the spacing between elements leads to improved system performance.

\appendix \label{LCR_corr_de}
First, we denote the covariance matrix between the channel vectors $\mathbf{h}(t)$ and $\mathbf{h}(t+\tau)$ as 
$\mathbf{\Sigma}_{\tau}$. The $(r,s)$-th entry of the matrix $\mathbf{\Sigma}_{\tau}$ corresponds to the correlation between the channel of the $r$-th antenna at position $t$ and the $s$-th antenna at position $t+\tau$. For convenience, we assume $\beta=1$ and generalize to arbitrary powers later. By using \eqref{dist} and \eqref{correlation} we can write the $(r,s)$-th entry of the matrix $\mathbf{\Sigma}_{\tau}$ as
\begin{align}
    [\mathbf{\Sigma}_{\tau}]_{r,s} &= \mathbb{E}[h_r(t) h_s^*(t+\tau)]\nonumber\\
    &= J_0(2\pi \sqrt{\tau^2 + \Delta^2(r-s)^2})\nonumber\\
    &= J_0(2\pi|r-s|\Delta + \delta) + o(\tau^2),
\end{align}where $\delta = \frac{\pi \tau^2}{\Delta|r-s|}.$ Using the first-order expansion of the zeroth-order Bessel function,  
\begin{align}
    \frac{J_0(c+\epsilon) - J_0(c)}{\epsilon} \approx -J_1(c) \quad \text{as} \: \epsilon \to 0,
\end{align}
we obtain, for $r \neq s$,  
\begin{align}\label{sigtaurs}
    [\mathbf{\Sigma}_{\tau}]_{r,s} 
    &= J_0\!\big(2\pi |r-s| \Delta\big) 
    - J_1\!\big(2\pi |r-s| \Delta\big)\,\delta 
    + o(\tau^2).
\end{align}
For $r=s$, by using \eqref{dist} and \eqref{correlation}, we can write
\begin{align}\label{sigtaurr}
[\mathbf{\Sigma}_{\tau}]_{r,r} 
    &= J_0(2\pi \tau) = 1 - \pi^2 \tau^2 + o(\tau^2).
\end{align} 
Now, using \eqref{sigmars}, \eqref{sigtaurs}, and \eqref{sigtaurr} we can write $\mathbf{\Sigma}_{\tau}$ as, 
\begin{align}\label{sigxy}
    \mathbf{\Sigma}_{\tau} = \mathbf{\Sigma} - \mathbf{B}\tau^2,
\end{align}
where, $[\mathbf{B}]_{r,s} = \frac{\pi J_1(2\pi|r-s|\Delta)}{|r-s|\Delta}$ and $[\mathbf{B}]_{r,r} = \pi^2$. For a correlated Gaussian channel, the spatial variation of the channel vector can be modeled as
\begin{equation}\label{ht+tau}
    \mathbf{h}(t+\tau) = \mathbf{A}\mathbf{h}(t)+\mathbf{e},
\end{equation}
where $\mathbf{A}$ and $\mathbf{e}$ control the correlation, while $\mathbf{e} \sim \mathcal{CN}(\mathbf{0}, \mathbf{I}_M)$ is a vector of independent unit-variance complex Gaussian entries. Since the covariance matrix at $t+\tau$ is still equal to $\mathbf{\Sigma}$, from \eqref{ht+tau} we can write
\begin{align}\label{corrsig}
    \mathbf{\Sigma} = \mathbf{A} \mathbf{\Sigma} \mathbf{A}^H +  \mathbb{E}(\mathbf{e}\mathbf{e}^H).
\end{align}
Furthermore, $\mathbf{\Sigma}_{\tau}$ can be written as 
\begin{align}\label{sigtau}
    \mathbf{\Sigma}_{\tau}=\mathbb{E}[\mathbf{h}(t)\mathbf{h}(t+\tau)^H].
\end{align}
By substituting the expression for $\mathbf{h}(t+\tau)$ from \eqref{ht+tau}, \eqref{sigtau} can be expressed as
\begin{align}\label{Amat}
\mathbf{\Sigma}_{\tau} &= \mathbf{\Sigma} \mathbf{A}^H.
\end{align}
Now, using \eqref{snr_array}, $\dot{S}(t)$ can be written as
\begin{align}
    \dot{S}(t) &= \lim_{\tau \to 0}\frac{E_{x}\beta \big(\mathbf{h}(t+\tau)^H \mathbf{h}(t+\tau) -\mathbf{h}(t)^H \mathbf{h}(t)\big) }{\sigma^2 \tau},
\end{align}
for any $\beta$. Then, by substituting from \eqref{ht+tau}, this becomes
\begin{align}\label{dots}
    \dot{S}(t) &= \lim_{\tau \to 0}\frac{E_{x}\beta} {\sigma^2 \tau} \big(\mathbf{h}(t)^H\mathbf{A}^H\mathbf{A}\mathbf{h}(t) + \mathbf{e}^H \mathbf{e}+ \mathbf{e}^H \mathbf{A}\mathbf{h}(t) \nonumber\\
    &+\mathbf{h}(t)^H\mathbf{A}^H\mathbf{e} - \mathbf{h}(t)^H\mathbf{h}(t) \big).
\end{align}
To obtain an expression for $\dot{S}(t)$ using \eqref{dots}, we first derive an expression for $\mathbf{A}^H\mathbf{A}$ using  \eqref{sigxy} and \eqref{Amat} as
\begin{align}\label{AA}
    \mathbf{A}^H\mathbf{A} &= \mathbf{\Sigma}^{-1}\mathbf{\Sigma}_{\tau}\mathbf{\Sigma}_{\tau}^H\mathbf{\Sigma}^{-1}\nonumber\\
    &= \mathbf{I} + \mathbf{C} \tau^2 + o(\tau^4) ,
\end{align}
where $\mathbf{C} = -\mathbf{B}\mathbf{\Sigma}^{-1} -  \mathbf{\Sigma}^{-1}\mathbf{B}$.
Next, using  \eqref{sigxy}, \eqref{corrsig}, and \eqref{Amat}, we write
\begin{align}
    \mathbb{E}[\mathbf{e}\mathbf{e}^H] &= \mathbf{\Sigma} - (\mathbf{\Sigma} - \mathbf{B} \tau^2)\mathbf{\Sigma}^{-1} (\mathbf{\Sigma} - \mathbf{B} \tau^2)\nonumber\\
    &=  2 \mathbf{B} \tau^2 + o(\tau^4).
\end{align}
Hence, $\mathbf{e} = \sqrt{2} \tau \mathbf{B}^{\frac{1}{2}}\mathbf{u}$ where $\mathbf{u} \sim \mathcal{CN}(\mathbf{0}, \mathbf{I}_M)$.
Using this expression for $\mathbf{e}$ and the expression in \eqref{AA}, $\dot{S}(t)$ becomes
\begin{align}
\dot{S}(t)
&= \frac{E_x\beta}{\sigma^2}\lim_{\tau \to 0}\frac{\mathbf{e}^H \mathbf{A} \mathbf{h}(t) + \mathbf{h}(t)^H \mathbf{A}^H \mathbf{e}}{\tau}\nonumber\\
&=  \frac{E_x\beta}{\sigma^2}\lim_{\tau \to 0}\frac{\sqrt{2}\tau(\mathbf{u}^H \mathbf{B}^{\frac{1}{2}}\mathbf{A}\mathbf{h}(t) + \mathbf{h}(t)^H \mathbf{A}^H \mathbf{B}^{\frac{1}{2}}\mathbf{u})}{\tau}.
\end{align}
From \eqref{AA}, it follows that  $\mathbf{A} \to \mathbf{I}$ as $\tau \to 0$. Therefore,
\begin{align}
 \dot{S}(t) &= \frac{\sqrt{2} E_x\beta}{\sigma^2}(\mathbf{u}^H \mathbf{B}^{\frac{1}{2}}\mathbf{h}(t) + \mathbf{h}(t)^H\mathbf{B}^{\frac{1}{2}}\mathbf{u})\nonumber\\
 &= \frac{2\sqrt{2} E_x\beta}{\sigma^2} \Re\!\left\{ \mathbf{u}^H \mathbf{B}^{\tfrac{1}{2}} \mathbf{h}(t) \right\},
\end{align}
where $\Re\{x\}$ represents the real part of $x$. Now we can write $\dot{S}(t) \sim \mathcal{N}(0, \sigma_0^2)$, where $\sigma_0^2$ can be calculated as
\begin{align}
    \sigma_0^2 &= 8 \bigg(\frac{E_x\beta}{\sigma^2}\bigg)^2 \mathbb{E}\left[\frac{|\mathbf{u}^H \mathbf{B}^{\frac{1}{2}}\mathbf{h}(t)|^2}{2}\right]\nonumber\\
    &= 4 \bigg(\frac{E_x\beta}{\sigma^2}\bigg)^2  \operatorname{tr}\bigg(\mathbf{B}^{\frac{1}{2}}\mathbf{h}(t)\mathbf{h}(t)^H\mathbf{B}^{\frac{1}{2}}\bigg)\nonumber\\
    &= 4 \bigg(\frac{E_x\beta}{\sigma^2}\bigg)^2\mathbf{h}(t)^H \mathbf{B} \mathbf{h}(t),
\end{align}
where $\text{tr}(.)$ represents the trace of the given matrix. Since $\mathbf{h}(t) = \mathbf{\Sigma} ^ {\frac{1}{2}} \mathbf{v}$ with $\mathbf{v} \sim \mathcal{CN}(\mathbf{0}, \mathbf{I}_M)$, we can write $\dot{S}(t) = \sqrt{\mathbf{{v}}^H\mathbf{Q}\mathbf{{v}}} z$ where $z \sim \mathcal{N}(0,1)$ and $\mathbf{Q} = 4\bigg(\frac{E_x\beta}{\sigma^2}\bigg)^2\mathbf{\Sigma}^{\frac{1}{2}}\mathbf{B} \mathbf{\Sigma}^{\frac{1}{2}}$. Defining $p=\mathbf{{v}}^H\mathbf{Q}\mathbf{{v}}$, we obtain $\dot{S} \sim \mathcal{N}(0, p)$, and furthermore, $S(t) = \mathbf{h}(t)^H\mathbf{h}(t) = \mathbf{{v}}^H\mathbf{\Sigma}\mathbf{{v}}$.

The LCR of SNR can be written as
\begin{align}\label{lcrbasic}
    LCR_{S(t)}(s_\text{th}) &=  \int_0^\infty \dot{s}f_{S,\dot{S}}(s_\text{th},\dot{s}) d\dot{s}\nonumber\\
    &= \int_0^\infty \dot{s}\int_0^{\infty} f_{\dot{S}|S,P}(\dot{s}|s_{\text{th}},p) f_{S,P}(s_{\text{th}},p)dp d\dot{s}.
\end{align}
Since $\dot{S} \sim \mathcal{N}(0, p)$, the conditional pdf $f_{\dot{S}|S,P}(\dot{s}|s_{\text{th}},p)=\frac{e^{\frac{-\dot{s}^2}{2p}}}{\sqrt{2\pi p}}$.
Therefore, \eqref{lcrbasic} can be written as 
\begin{align}
     LCR_{S(t)}(s_\text{th}) &= \int_0^{\infty} \int_0^\infty \dot{s}\frac{e^{\frac{-\dot{s}^2}{2p}}}{\sqrt{2\pi p}}d\dot{s} f_{S,P}(s_{\text{th}},p)dp.
\end{align}
Now, using the integral identity in [Eq. 3.326-2]\cite{Gradshteyn_book_2007} we obtain
\begin{align}\label{lcrsp}
     LCR_{S(t)}(s_\text{th}) &= \frac{1}{\sqrt{2\pi}}\int_0^{\infty} \sqrt{p} f_{S,P}(s_{\text{th}},p)dp.
\end{align}
Now, we proceed with the following approach to find the joint probability density function (pdf) of $S$ and $P$, $f_{S,P}(s_{\text{th}},p)$. Since $S=  \mathbf{{v}}^H\mathbf{\Sigma}\mathbf{{v}}$ and $P=\mathbf{{v}}^H\mathbf{Q}\mathbf{{v}}$, $f_{S,P}(s_{\text{th}},p)$ represents a joint pdf of two quadratic forms. By defining $\mathbf{w} = \mathbf{Q}^{\frac{1}{2}}\mathbf{v}$, we can write $S=\mathbf{w}^H \mathbf{F} \mathbf{w}$ and $P=\mathbf{w}^H\mathbf{w}$ where $\mathbf{F} = \mathbf{Q}^{-\frac{1}{2}} \mathbf{\Sigma} \mathbf{Q}^{-\frac{1} {2}}$. The pdf of $\mathbf{w}$ can be written as
\begin{align}\label{pdfw}
    f_{W}(w) = \frac{1}{\pi^M |\mathbf{Q}|} e^{-(\mathbf{w}^H \mathbf{Q}^{-1}\mathbf{w})}.
\end{align}
Now, using \eqref{pdfw}, the joint characteristic function (jcf) of $S$ and $P$ can be written as
\begin{align}
    \Phi_{S,P}(t_1,t_2)&=\mathbb{E}\left[e^{jt_1S + jt_2P}\right]  \nonumber\\
    &= \mathbb{E}\left[e^{jt_1\mathbf{w}^H \mathbf{F}\mathbf{w} + jt_2\mathbf{w}^H\mathbf{w}}\right]\nonumber\\
    &= \int_{-\infty}^{\infty} e^{\mathbf{w}^H(jt_1\mathbf{F}+jt_2\mathbf{I})\mathbf{w}} f_W(w) dw.
\end{align}
By using the fact that $\int_{-\infty}^{\infty} \frac{e^{\mathbf{-w}^H(\mathbf{Q}^{-1}-jt_1\mathbf{F}-jt_2\mathbf{I})\mathbf{w}}}{\pi^M |(\mathbf{Q}^{-1}-jt_1\mathbf{F}-jt_2\mathbf{I})^{-1}|} dw =1$, $ \Phi_{S,P}(t_1,t_2)$ can be written as
\begin{align}
   \Phi_{S,P}(t_1,t_2) &= \frac{1}{|\mathbf{Q}| |(\mathbf{Q}^{-1}-jt_1\mathbf{F}-jt_2\mathbf{I})|}\nonumber\\
   &= \frac{1}{|\mathbf{Q}| \prod_{i=1}^{M}(g_i - jt_2) },
\end{align}
where $\mathbf{G}=\mathbf{Q}^{-1} - jt_1 \mathbf{F}$ and the eigenvalues of $\mathbf{G}$ are denoted by $g_i$. Note that $g_i$ are functions of $t_1$. Since the joint pdf is the inverse of the jcf, we can write
\begin{align}\label{jcf}
f_{S,P}(s_{\text{th}},p) 
&= \frac{1}{(2\pi)^2} \iint_{\mathbb{R}^2} 
\Phi_{S,P}(t_1,t_2) \, e^{-j (t_1 s_{\text{th}} + t_2 p)} \, dt_1 dt_2 \nonumber\\
&= \frac{1}{(2\pi)^2} \iint_{\mathbb{R}^2} 
\frac{e^{-j (t_1 s_{\text{th}} + t_2 p)}}{|\mathbf{Q}| \prod_{i=1}^{M} (g_i - j t_2)} \, dt_1 dt_2.
\end{align}
By substituting \eqref{jcf} into \eqref{lcrsp}, the LCR of the SNR becomes
\begin{align}
LCR_{S(t)}(s_\text{th}) 
= K
\int_{-\infty}^{\infty} \int_0^{\infty} \int_{-\infty}^{\infty} 
\frac{\sqrt{p} \, e^{-j(t_1 s_\text{th} + t_2 p)}}{\prod_{i=1}^{M} (g_i - j t_2)} 
\, dt_2 \, dp \, dt_1,
\end{align}
where $K=\frac{(2\pi)^{-\frac{5}{2}}}{|\mathbf{Q}|}$. Using partial fraction decomposition, we write $\frac{1}{\prod_{i=1}^{M}(g_i - jt_2)} = \sum_{i=1}^{M} \frac{A_i}{g_i -jt_2}$ where $A_i = \frac{1}{\prod_{k=1,k\neq i}^{M}(g_k - g_i)}$. Then, using the integral identities in [Eq. (3.382-7) and Eq. (3.326-2),~\!\citenum{Gradshteyn_book_2007}], we derive \eqref{corr_array_lcr}.
\bibliographystyle{IEEEtran}
\bibliography{references}

\end{document}